\DeclareSymbolFontAlphabet{\mathrsfs}{rsfs}
\let\@secnumfont\bfseries
\def\section{\@startsection{section}{1}%
  \z@{4\linespacing\@plus\linespacing}{\linespacing}%
  {\bfseries\centering}}
\def\introsection{\@startsection{section}{1}%
  \z@{3\linespacing\@plus\linespacing}{\linespacing}%
  {\bfseries\centering}}
\def\subsection{\@startsection{subsection}{2}%
   \z@{1.25\linespacing\@plus.7\linespacing}{.5\linespacing}%
   {\normalfont\bfseries}}
\def\subsectionsinline{\def\subsection{\@startsection{subsection}{2}%
  \z@{1\linespacing\@plus.7\linespacing}{-.5em}%
  {\normalfont\bfseries}}}
\numberwithin{equation}{section}
\newcommand{\mynewtheorem}[2]{
  \newaliascnt{#1}{equation}
  \newtheorem{#1}[#1]{#2}
  \aliascntresetthe{#1}
  \expandafter\def\csname #1autorefname\endcsname{#2}
}
\theoremstyle{definition}
\newtheorem*{definition*}{Definition}
\newtheorem*{example*}{Example}
\newtheorem*{problem*}{\color{blue}Problem}
\newtheorem*{probsec*}{\color{blue}Problem}
\newtheorem*{exercise*}{Exercise}
\newtheorem*{question*}{\color{blue}Question}
\newtheorem*{project*}{\color{blue}Project}
\newtheorem*{construction*}{Construction}
\newtheorem*{notation*}{Notation}
\theoremstyle{remark}
\newtheorem*{note*}{Note}
\newtheorem*{remark*}{Remark}
\newtheorem*{data*}{Data}
\theoremstyle{plain}
\newtheorem*{theorem*}{Theorem}
\newtheorem*{corollary*}{Corollary}
\newtheorem*{lemma*}{Lemma}
\newtheorem*{proposition*}{Proposition}
\newtheorem*{conjecture*}{Conjecture}
\newtheorem*{claim*}{Claim}
\newtheorem*{proposal*}{Proposal}
\newtheorem*{conclusion*}{Conclusion}
\newtheorem*{hypothesis*}{Hypothesis}
\newtheorem*{assumption*}{Assumption}
\newenvironment{proof*}[1][\proofname]{
  \begin{proof}[#1]}{  
\end{proof}}
\definecolor{refkey}{rgb}{0,.6,.4}
\renewcommand{\:}{\colon}
\newcommand{\CC}{{\mathbb C}}
\DeclareMathOperator{\Diff}{Diff}
\DeclareMathOperator{\Hom}{Hom}
\DeclareMathOperator{\id}{id}
\newcommand{\PP}{{\mathbb P}}
\newcommand{\QQ}{{\mathbb Q}}
\newcommand{\RR}{{\mathbb R}}
\DeclareMathOperator{\tr}{tr}
\newcommand{\ZZ}{{\mathbb Z}}
\newcommand{\chiup}{\raise.5ex\hbox{$\chi$}}
\newcommand{\inv}{^{-1}}
\DeclareRobustCommand{\mstrut}{^{\vphantom{1*\prime y\vee M}}}
\DeclareMathOperator{\rank}{rank}
\newcommand{\temsquare}{\raise3.5pt\hbox{\boxed{ }}}
\newcommand{\zmod}[1]{\ZZ/#1\ZZ}
\DeclareFontFamily{U}{mathx}{}
\DeclareFontShape{U}{mathx}{m}{n}{<-> mathx10}{}
\DeclareSymbolFont{mathx}{U}{mathx}{m}{n}
\DeclareMathAccent{\widehat}{0}{mathx}{"70}
\DeclareMathAccent{\widecheck}{0}{mathx}{"71}
\DeclareMathSymbol{\bigtimes}{1}{mathx}{"91}
\DeclareMathOperator{\SO}{SO}
\DeclareMathOperator{\U}{U}
\DeclareMathOperator{\SU}{SU}
\DeclareMathOperator{\PU}{PU}
\definecolor{refkey}{rgb}{0,.8,.2}\definecolor{labelkey}{rgb}{1,0,0} 
\newcounter{alphthm}
\theoremstyle{plain}
\newtheorem{mainthm}[alphthm]{Theorem}
\newtheorem*{ratcri*}{Rationality Criterion}
\DeclareMathOperator{\Det}{Det}
\newcommand{\BG}{BG}
\newcommand{\BSO}{B\!\SO}
\newcommand{\BUo}{B\!\U(1)}
\newcommand{\Cx}{\CC^{\times}}
\newcommand{\OsE}{\Omega _{\sE}}
\newcommand{\QZ}{\QQ/\ZZ}
\newcommand{\RZ}{\RR/\ZZ}
\newcommand{\SH}{\mathfrak{u}}
\newcommand{\aIR}{\alpha \mstrut _{\textnormal{IR}}}
\newcommand{\aUV}{\alpha \mstrut _{\textnormal{UV}}}
\newcommand{\ath}{\alpha \mstrut _{\theta }}
\newcommand{\bth}{\beta \mstrut _{\theta }}
\newcommand{\fpP}{f _{(p,P)}}
\newcommand{\fpjPj}{f_{(p_j,P_j)}}
\newcommand{\gk}{\gamma \mstrut _{\kappa }}
\newcommand{\iR}{\sqrt{-1}\RR}
\newcommand{\rA}{I}
\newcommand{\rF}{\mathrsfs{F}}
\newcommand{\sE}{\mathscr{E}}
\newcommand{\sG}{\mathscr{G}}
\newcommand{\sH}{\mathscr{H}}
\newcommand{\sX}{\mathscr{X}}
\newcommand{\sY}{\mathscr{Y}}
\newcommand{\tS}{\widetilde{S}}
\newcommand{\tc}{\tilde{c}}
\newcommand{\tpi}{2\pi \sqrt{-1}}
\begin{document}

\abovedisplayskip18pt plus4.5pt minus9pt
\belowdisplayskip \abovedisplayskip
\abovedisplayshortskip0pt plus4.5pt
\belowdisplayshortskip10.5pt plus4.5pt minus6pt
\baselineskip=15 truept
\marginparwidth=55pt

\makeatletter
\renewcommand{\tocsection}[3]{%
  \indentlabel{\@ifempty{#2}{\hskip1.5em}{\ignorespaces#1 #2.\;\;}}#3}
\renewcommand{\tocsubsection}[3]{%
  \indentlabel{\@ifempty{#2}{\hskip 2.5em}{\hskip 2.5em\ignorespaces#1%
    #2.\;\;}}#3} 
\renewcommand{\tocsubsubsection}[3]{%
  \indentlabel{\@ifempty{#2}{\hskip 5.5em}{\hskip 5.5em\ignorespaces#1%
    #2.\;\;}}#3} 
\def\@makefnmark{%
  \leavevmode
  \raise.9ex\hbox{\fontsize\sf@size\z@\normalfont\tiny\@thefnmark}} 
\def\multfoot{\textsuperscript{\tiny\color{red},}}
\def\footref#1{$\textsuperscript{\tiny\ref{#1}}$}

\makeatother

\newcommand{\bmuu}{\mbox{$\raisebox{-.07em}{\rotatebox{9.9}
  {\tiny {\bf /}
  }}\hspace{-0.53em}\mu\hspace{-0.88em}\raisebox{-0.98ex}{\scalebox{2} 
  {$\color{white}\phantom{.}$}}\hspace{-0.416em}\raisebox{+0.88ex}
  {$\color{white}\phantom{.}$}\hspace{0.46em}$}} 
\newcommand{\bmut}{\bmu 2}
\newcommand{\bmu}[1]{\bmuu _{#1}}

\setcounter{tocdepth}{2}



 \title[Anomalies, gaps, and torsion]{Gapped theories have torsion anomalies} 

  \author[C.~C\'ordova]{Clay C\'{o}rdova}
  \thanks{CC is supported by the Simons Foundation Award 888980 as part of the
Simons Collaboration on Global Categorical Symmetries, as well as by the Sloan
Foundation and by DOE Early Career Grant 5-29073.}
  \address{Department of Physics, Kadanoff Center for Theoretical Physics \&
Enrico Fermi Institute, University of Chicago}
  \email{clayc@uchicago.edu}
  
  \author[D. S. Freed]{Daniel S.~Freed}
 \thanks{DSF is supported by the National Science Foundation under Grant Number
DMS-2005286 and by the Simons Foundation Award 888988 as part of the Simons
Collaboration on Global Categorical Symmetries.}
 \address{Harvard University \\ Department of Mathematics \\ Science Center
Room 325 \\ 1 Oxford Street \\ Cambridge, MA 02138}
 \email{dafr@math.harvard.edu}

  \author[C. Teleman]{Constantin Teleman} 
  \thanks{CT is supported by the Simons Foundation Award 824143 as part of the
Simons Collaboration on Global Categorical Symmetries.}
  \address{Department of Mathematics \\ University of California \\ 970 Evans
 Hall \#3840 \\ Berkeley, CA 94720-3840}  
  \email{teleman@math.berkeley.edu}

  \thanks{We thank Thomas Dumitrescu and Ryan Thorngren for informative and
insightful discussions.}

 \date{August 27, 2024}
 \begin{abstract} 
 We prove special cases of a general conjecture: If an invertible field theory
admits a projectively topological boundary theory, then it has finite order in
the abelian group of invertible field theories.  One can substitute `gapped'
for `projectively topological'.  Our proofs use evaluations of a field theory
in parametrized families.
 \end{abstract}
\maketitle

Fix $\theta \in \RZ$ and let $\ath $ denote the fully local invertible
2-dimensional field theory whose partition function on a closed 2-manifold~$X$
is
  \begin{equation}\label{eq:1}
     \ath(X,\Theta ) = \exp\left( -\theta \int_{X}\Omega (\Theta )  \right) .
  \end{equation}
The background fields on~$X$ are an orientation and a
$\U(1)$-connection~$\Theta $ with curvature~$\Omega (\Theta )$.

  \begin{mainthm}[]\label{thm:1}
 If $\ath$~admits a topological boundary theory~$F$, then $\theta $~is a
rational number.
  \end{mainthm}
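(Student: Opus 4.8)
Because $\Omega (\Theta )$ is the curvature of $\Theta $, on a closed oriented surface $X$ one has $\int_X\Omega (\Theta )=\tpi\,c_1(\Theta )$ with $c_1(\Theta )\in \ZZ $ the first Chern number, so \eqref{eq:1} becomes $\ath(X,\Theta )=e^{-\tpi\,\theta\,c_1(\Theta )}$; in particular $\ath$ is multiplicative in $\theta $ and its value on a degree-one connection over $S^2$ is $e^{-\tpi\theta }$.  Hence $\theta $ is rational exactly when $e^{\tpi\theta }$ is a root of unity, i.e.\ exactly when all partition functions of $\ath$ on closed surfaces are roots of unity, and proving the theorem amounts to exhibiting, from the topological boundary theory, an integer $N\geq 1$ with $e^{\tpi N\theta }=1$.  (This is the case in which ``finite order in the group of invertible theories'' specialises to ``rational''.)

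\textbf{Strategy via parametrized families.}  Given the topological boundary theory $F$, I would work with the relative theory $(\ath,F)$: a compact oriented surface $Y$ whose boundary is coloured by $F$ and which carries a $\U (1)$-connection $\Theta $ has a partition function $Z(Y,\Theta )\in \CC $, and since $F$ is topological, the $\Theta $-dependence of $Z(Y,\Theta )$ is carried entirely by the bulk factor $\exp\!\bigl(-\theta \int_Y\Omega (\Theta )\bigr)$; the boundary contributes only discrete (holonomy-level) data.  The plan is then to evaluate $(\ath,F)$ in a suitable parametrized family --- this is where the method of the paper enters --- designed so that the bulk contributes a definite multiple of $\theta $ while the topological boundary contributes only a root of unity.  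The natural target is the generator of the class of $\ath$: as a theory of $B\U (1)$-manifolds its deformation class lives in $H^2\bigl(B\U (1);\RZ\bigr)\cong \RZ$, detected on the $2$-skeleton $\CP^1$, and a single unit of flux through a capping disc contributes the factor $e^{-\tpi\theta }$.  A topological boundary theory is, modulo the invertible bulk, an honest relative topological field theory, hence has only finitely many simple sectors, say $N$; its contribution to the family therefore takes values in the $N$th roots of unity.  Matching the two sides along the family forces $e^{\tpi N\theta }=1$, hence $\theta \in \QQ $.

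\textbf{The main obstacle.}  The crux is precisely the implication ``$F$ topological $\Rightarrow$ its contribution in families is $N$-torsion'', which looks false at first sight: $\ath$ assigns to $S^1$ a \emph{flat} line bundle over the space of $\U (1)$-connections on $S^1$, with monodromy $e^{\pm\tpi\theta }$ around a winding-one gauge transformation, so there is no $\Theta $-independent boundary vector at all unless $\theta \in \ZZ $.  The resolution --- and the technical heart of the argument --- is that a topological boundary theory does not furnish a flat global section of this line bundle, but only a section on the $N$-fold cover cut out by the sectors of $F$; equivalently it trivialises $\ath^{\otimes N}$ rather than $\ath$.  Showing that such an $N$ is finite is exactly what the parametrized-family evaluation is for: it realises the relevant partition functions as values of a finite topological field theory in a family, hence as algebraic numbers of modulus one, which are necessarily roots of unity.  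A further bookkeeping point, which must be handled before the flux count above is legitimate, is to control how $F$ interacts with the component group of the $\U (1)$-gauge group on the boundary $1$-manifold --- the large gauge transformations --- along the family; once this is in hand the computation is routine and yields $\theta \in \QQ $.
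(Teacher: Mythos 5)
Your overall strategy---extract from the topological boundary theory a finite integer $N$ with $N\theta\in\ZZ$, by evaluating the coupled system in a parametrized family---is the right one, and your ``main obstacle'' paragraph correctly isolates the crux: $\ath$ evaluated on families of boundary data is only a \emph{flat} line/gerbe with monodromy built from $e^{\tpi\theta}$, so $F$ can at best trivialize a finite power of it, and the whole content is to show such a finite power exists. But that crux is exactly what your proposal does not prove. The one justification you offer is invalid: partition-function values being ``algebraic numbers of modulus one, which are necessarily roots of unity'' is false as stated (e.g.\ $(3+4i)/5$ has modulus one and is algebraic but is not a root of unity; Kronecker's theorem requires algebraic \emph{integers} all of whose conjugates lie on the unit circle), and nothing in your setup shows the family values are algebraic in the first place. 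Likewise, ``finitely many simple sectors, say $N$, hence the boundary contribution lies in $N$th roots of unity'' does not follow: a rank-$N$ flat (or projectively flat) bundle has monodromy in $\GL_N(\CC)$, not in roots of unity. Finally, the ``bookkeeping point'' about large gauge transformations that you defer is not bookkeeping---it is precisely the statement that topologicality of $F$ forces a (projectively) flat structure on its values in families, which is the engine of any proof along these lines.

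For comparison, the paper closes this gap without any root-of-unity or algebraicity argument. One evaluates both theories on the constant family of points over $S=S^2$ equipped with a degree-one principal $\U(1)$-bundle with connection: $\ath$ yields a flat gerbe on $S^2$ whose class in $H^2(S^2;\RZ)\cong\RZ$ is $-\theta$, while the boundary theory---topological, hence projectively flat in families, the relative form of Proposition~\ref{thm:14}---yields a finite-rank $\ath$-twisted hermitian bundle $\sE$ with projectively flat connection. Taking the trace (equivalently, the determinant line) in the curvature identity of Proposition~\ref{thm:12} shows $r\mu=-c_1(\sE)\bmod r$ for $r=\rank\sE$, i.e.\ $\mu\in\QZ$, hence $\theta\in\QQ$. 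This determinant step is the rigorous version of your assertion that $F$ ``trivialises $\ath^{\otimes N}$'': the finite $N$ is the rank of the state space $F(\pt)$, and integrality of $c_1$ replaces your appeal to roots of unity. As written, your proposal names the right finiteness but leaves the decisive implication unproven, so it is not yet a proof.
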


  \begin{remark}[]\label{thm:2}
\ 
 \begin{enumerate}[label=\textnormal{(\roman*)}]

 \item If $F$~exists, then we say that $F$~is an anomalous topological field
theory with anomaly~$\ath$.

 \item More precisely, the theory $F$~is \emph{projectively} topological.

 \item In general, if $\rF$~is an $n$-dimensional quantum field theory with
anomaly an invertible $(n+1)$-dimensional field theory~$\aUV$, then under
renormalization group flow the anomaly~$\aUV $ deforms to an invertible
theory~$\aIR$; for the theory~$\ath$ in~\eqref{eq:1}, this means that the
constant~$\theta $ can ``run''.  (The deformation class of~$\aUV $, which is a
discrete invariant, does not change.)  If a field theory is \emph{gapped}, then
one expects at long distance that it flows to a projectively topological field
theory: $\aIR $~admits a projectively topological boundary theory.  Hence, if
one knows the flow of the isomorphism class of~$\aUV$ under renormalization,
Theorem~\ref{thm:1} provides an obstruction to the existence of gapped
anomalous quantum field theories with specified anomaly.
See~\cite{CDFKS1,CDFKS2} for a class of theories in which one knows this flow
of the anomaly theory.

 \item The partition function~\eqref{eq:1} only depends on the principal
$\U(1)$-bundle $P\to X$ that underlies the connection~$\Theta $:
  \begin{equation}\label{eq:2}
     \ath(X,P) = \exp\left( \tpi\,\theta \,c_1(P)[X] \right) . 
  \end{equation}
In a restricted sense---see~(vi) below---$\ath$~is a topological field theory
that depends only on a principal $\U(1)$-bundle.  There is an
``integer-valued'' invertible field theory with partition function $c_1(P)[X]$
from which $\ath$~is obtained by exponentiation.  The real characteristic
class~$\theta c_1\in H^2(\BUo;\RZ)$ determines the theory~$\ath$.  The
condition in the theorem is that this class lie in the image of
$H^2(\BUo;\QZ)$, or equivalently that the isomorphism class of~$\ath$ have
finite order in the abelian group of 2-dimensional invertible field theories
(with given background fields).

 \item Theorem~\ref{thm:1}, and its cousins below, are closely related to: The
twisting of a finite rank twisted vector bundle has finite order in the abelian
group of twistings of $K$-theory.  (This is used in the proof of
\autoref{thm:12} below.)  The ``period-index problem'' asks for the minimal
rank of a twisted vector bundle that realizes a given finite order twisting,
and this problem can be posed for general Brauer groups; see~\cite{AW} for the
case of twisted $K$-theory.

 \item Our proof relies on evaluation of~$\ath$ and~$F$ in parametrized
families.  One motivation for this note is to exhibit the utility of
evaluation in families, even in topological field theory, an idea that can be
traced back to the 1990s, for example in~\cite{KM}.  A very small selection of
recent references is~\cite{CFLS1,CFLS2,HKT}.

 \item The invertible theory~$\ath$ is topological when evaluated on a single
manifold, as explained in~(iii) above, but it is {not} topological when
evaluated in families.  See \cite[Appendix~B]{FN} for more discussion of this
distinction.

 \item Theorem~\ref{thm:1} answers a very special case of a general problem:
When does an invertible field theory admit a projectively topological boundary
theory?  This question has received substantial attention in the physics
literature.  The earliest investigations involve perturbative chiral anomalies,
whose inflow theories are suitable generalizations of Chern-Simons terms.  In
this context it is known that the quantized level appearing in the inflow term
controls the coefficient of a local correlation function with power law decay
and hence enforces that the theory is gapless~\cite{HIJLMSS,FSBY,CG}. More
recently, this problem has been generalized to include variations where the
symmetry type is finite and one imposes the additional hypothesis that the
symmetry is not spontaneously broken.  This is enforced mathematically by
requiring that the spectrum of the theory on certain spheres is suitably
trivial.  Again one finds that often anomalies can obstruct topological
boundary theories, a phenomenon known as \emph{symmetry enforced gaplessness}.
See~\cite{WNMXS,CO1,CO2} for examples.

 \end{enumerate}
  \end{remark}

Another example of interest is the 4-dimensional invertible theory~$\bth$,
defined on oriented Riemannian manifolds, whose partition function on a closed
oriented 4-manifold is
  \begin{equation}\label{eq:3}
     \bth(X) = \exp\left( \tpi\,\theta \,p_1(TX)[X] \right) . 
  \end{equation}

  \begin{mainthm}[]\label{thm:3}
  If $\bth$~admits a projectively topological boundary theory~$F$, then $\theta
$~is a rational number.
  \end{mainthm}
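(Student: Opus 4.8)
\textit{Proof proposal.}
The natural approach, parallel to the proof of \autoref{thm:1}, is to evaluate~$\bth$ and the hypothetical boundary theory~$F$ on the fibers of a well-chosen family and to read off a constraint on~$\theta $ from the finiteness of~$F$. Although the partition function~\eqref{eq:3} is a bordism invariant---by the Hirzebruch signature theorem $p_1(TX)[X]=3\sigma (X)$, so $\bth(X)=\exp\bigl(6\pi \sqrt{-1}\,\theta \,\sigma (X)\bigr)$ depends only on the oriented bordism class of~$X$---the theory~$\bth$ is not topological in families, because fiber integration of the Pontryagin form produces genuine differential-geometric data over a parameter space. I would take the family to be a surface bundle $\pi \:\sX\to B$ over a closed oriented surface~$B$ of high genus, chosen so that the total space~$\sX$ is a closed oriented $4$-manifold of nonzero signature; such surface bundles over surfaces exist (Kodaira, Atiyah, Hirzebruch). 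Fix a Riemannian metric on~$\sX$.

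The first step is to identify what $\bth$~assigns to the surface fibers $\Sigma _b=\pi \inv(b)$ as $b$~varies over~$B$. As a $4$-dimensional invertible theory, $\bth$~sends a closed surface to an invertible linear category---equivalently, a gerbe-with-connection---and in the family these assemble into a gerbe $\sG\to B$ whose isomorphism class is obtained by fiber-integrating (the differential refinement of) the characteristic class $\theta p_1\in H^4(\BSO;\RZ)$ of~$\bth$ over the $2$-dimensional fibers. Since the Pontryagin form is closed, $\sG$~is flat, with class $\theta \kappa _1\in H^2(B;\RZ)$, where $\kappa _1=\pi _*\bigl(p_1(T^{\mathrm{vert}}\sX)\bigr)=\pi _*\bigl(e(T^{\mathrm{vert}}\sX)^2\bigr)\in H^2(B;\ZZ)$ is the first Mumford--Morita--Miller class of~$\pi $; it satisfies $\langle \kappa _1,[B]\rangle =p_1(T\sX)[\sX]=3\sigma (\sX)\neq 0$. (As a check, integrating $\sG$~over all of~$B$ recovers $\bth(\sX)$.)

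The second step uses~$F$. A projectively topological boundary theory~$F$ for~$\bth$ assigns to each fiber~$\Sigma _b$ a state space that is an object of the invertible category $\bth(\Sigma _b)$; since an invertible category is Morita-trivial, such objects over~$B$ are precisely the $\sG$-twisted vector bundles, and projective topologicality forces $F(\Sigma )$~to be finite-dimensional. Thus $F$~furnishes a finite-rank $\sG$-twisted vector bundle over~$B$. Now the mechanism behind \autoref{thm:1}, and the period--index circle of ideas recalled in \autoref{thm:2}(v), applies: passing to the determinant line shows that a finite-rank twisted vector bundle of rank~$r$ trivializes the $r$-th power of its twisting gerbe, so $r\cdot \theta \kappa _1=0$ in~$H^2(B;\RZ)$. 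Pairing with the fundamental class~$[B]$ gives $3r\,\theta \,\sigma (\sX)\in \ZZ$, and since $\sigma (\sX)\neq 0$ we conclude $\theta \in \QQ$. (Equivalently: $\sG$~is the gerbe that $\ath$ produces on this same parameter space when one equips~$B$ with a $\U(1)$-bundle of first Chern class~$\kappa _1$, so the obstruction is exactly the one analyzed in the proof of \autoref{thm:1}.)

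The step I expect to be the main obstacle is the first one: making precise the ``evaluation in families'' of the differential invertible theory~$\bth$---that a bundle of closed surfaces over~$B$ is sent to a gerbe-with-connection on~$B$ whose class is the fiber integral of the differential characteristic class, and that~$F$ is correspondingly sent to a twisted vector bundle over~$B$ that trivializes a power of it. A related subtlety is that $\sG$~is flat: its Dixmier--Douady class in $H^3(B;\ZZ)$ vanishes when $B$~is a surface, so the purely topological period--index statement is vacuous here, and the finite-order conclusion must instead be drawn in $H^2(B;\RZ)$ (equivalently, in degree-$3$ differential cohomology), using the finer differential structure. Granting these, the remaining inputs---existence of surface bundles of nonzero signature, the Hirzebruch signature theorem, and ``finite rank implies finite order''---are classical.
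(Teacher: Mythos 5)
Your strategy is the paper's: take an oriented surface bundle $p\:X\to S$ over a closed oriented surface with $p_1\bigl(T(X/S)\bigr)[X]\neq 0$ (Atiyah's examples), observe that $\bth$ evaluated on this family is a topologically trivial \emph{flat} gerbe whose 2-form is $-\theta$ times the fiber integral of the Chern--Weil form of $p_1$, feed the boundary theory~$F$ into the twisted-vector-bundle machinery, and pair with the fundamental class of the base. (Your detour through the signature theorem and the Mumford--Morita--Miller class is equivalent to the paper's hypothesis $p_1\bigl(T(X/S)\bigr)[X]\neq0$, since $p^*p_1(TS)=0$ for a surface base; the sign discrepancy is immaterial.)

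The one step whose justification, as written, would fail is the torsion step. You deduce $r\,\theta\kappa_1=0$ in $H^2(B;\RZ)$ from ``a finite-rank twisted bundle trivializes the $r$-th power of its twisting gerbe.'' That is a purely topological (Dixmier--Douady) statement, and, as you yourself observe, it is vacuous over a surface base, so finite rank alone gives no constraint on~$\theta$. The ingredient you gesture at as ``the finer differential structure'' but never identify is precisely where projective topologicality enters beyond finite-dimensionality: because $F$~is projectively topological, its evaluation on the family is a \emph{projectively flat} $\sG$-twisted bundle --- the relative version of \autoref{thm:14}, i.e.\ the twisted connection has $\OsE=0$ since the structure ``group'' is the mapping class group. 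Only then does \autoref{thm:12} apply: the curvature identity $F_i+B_i=0$ gives, after taking traces/determinants, $\mu=-c_1(\sE)/r\pmod 1$ for the flat gerbe class $\mu$, which here is $-\theta\,p_*[p_1]$ reduced mod~$1$; pairing with~$[S]$ yields $\theta\,p_1\bigl(T(X/S)\bigr)[X]\in\QZ$ and hence $\theta\in\QQ$. So your outline is the paper's proof, but the determinant argument must be run with the connection data (projective flatness), not just topologically; without that, the conclusion in $H^2(B;\RZ)$ does not follow.
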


\noindent
 In fact, this is the example relevant to~\cite{CDFKS1,CDFKS2}. 

  \begin{example}[]\label{thm:5}
 A modular tensor category~$A$ is an invertible object in the 4-category of
braided tensor categories, and when supplemented with $\SO_4$-invariance data
it determines via the cobordism hypothesis a 4-dimensional invertible
topological field theory~$\beta $ of oriented manifolds.  The regular module
of~$A$---equipped with $\SO_3$-invariance data---determines a topological
boundary theory~$F$, which is a Reshetikhin-Turaev theory.  The chiral central
charge $c\in \QQ/8\ZZ$ is computed from the algebraic data in~$A$, and part of
the data is a lift to~$\tc\in \QQ/24\ZZ$.  Then $\beta =\bth$ with $\theta
=\tc/24\pmod1$.  This point of view on Reshetikhin-Turaev theories is due to
Walker~\cite{Wa}; see~\cite{Ha} for recent developments using the cobordism
hypothesis.
  \end{example}

Here is a general result that we can prove using our
techniques.\footnote{Analogous results hold for characteristic numbers in
generalized cohomology theories.}  Let $G$~be a Lie group, let $n$~be a
positive integer, and consider characteristic classes of smooth oriented
$(n+1)$-manifolds~$X$ equipped with a principal $G$-bundle $P\to X$.  A
characteristic class
  \begin{equation}\label{eq:14}
     \kappa \in H^{n+1}(\BSO\times \BG;\RZ) 
  \end{equation}
defines an invertible $(n+1)$-dimensional field theory~$\gk $ with partition
function
  \begin{equation}\label{eq:4}
     \gk (X,P )=\exp\left( \tpi\,\kappa (TX,P)[X] \right).
  \end{equation}
Let $p\:X\to S$ be an oriented fiber bundle with $\dim X=n+1$, and let $P\to
X$ be a principal $G$-bundle.  Suppose in addition that $S$~is oriented.
Then $(p,P)$~determines a linear functional
  \begin{equation}\label{eq:15}
     \begin{aligned} \fpP\:H^{n+1}(\BSO\times \BG;\RR)&\longrightarrow \RR \\
      \lambda &\longmapsto p_*\bigl[\lambda\bigl(T(X/S),P \bigr)
      \bigr]\,[S]\end{aligned} 
  \end{equation}

  \begin{ratcri*}[]\label{thm:4}
 Assume there exists a finite set~$\bigl\{(p_j,P _j)\bigr\}_{j\in J}$ in which
\textnormal{(i)}~$p_i\:X_i\to S_i$~is an oriented fiber bundle, $\dim X_i=n+1$,
$\dim S_i=2$, $S_i$~is oriented; and \textnormal{(ii)}~the functionals $\bigl\{
\fpjPj \bigr\}_{j\in J} $ form a basis of $H^{n+1}(\BSO\times \BG;\RR)^*$.
Then if $\gk $~ admits a projectively topological boundary theory~$F$, the
class~$\kappa $ lies in the image of $H^{n+1}(\BSO\times \BG;\QZ)$.
  \end{ratcri*}

\noindent
 The existence of the basis $\bigl\{ (p_j,P _j) \bigr\}_{j\in J} $ is surely
not necessary, but rather is a limitation of the technique we employ in this
note.  It is reminiscent of conditions in the early literature on anomalies,
for example~\cite{K,FV}, which are eliminated when stronger locality is
imposed.  Here we only assume that $\gk $~is an $(n-1,n,n+1)$-theory and
$F$~is an $(n-1,n)$-theory.

  \subsection*{Quantum mechanics in a family}

Let $F$~be an $n$-dimensional \emph{topological} field theory as in~\cite{A1}.
(We leave the background fields---orientation, spin structure,
etc.---implicit.)  If $Y$~is a closed $(n-1)$-manifold, then $F(Y)$~is a
complex vector space.  Furthermore, diffeomorphisms of~$Y$ act on the vector
space~$F(Y)$ \emph{topologically}: the action factors through $\pi _0\Diff(Y)$.
This implies that if $\rho \:\sY\to S$ is a smooth fiber bundle whose fibers
are closed $(n-1)$-manifolds, then the values of~$F$ on the fibers form a
\emph{flat} complex vector bundle $F(\rho )=F(\sY/S )\to S$: a smooth complex
vector bundle with a flat covariant derivative.  For later reference, we
memorialize the conclusion. 

  \begin{proposition}[]\label{thm:14}
 Let $F$~be an $n$-dimensional topological field theory, and suppose $\rho
\:\sY\to S$ is a smooth fiber bundle of closed $(n-1)$-dimensional manifolds.
Then $F$~evaluates on~$\rho $ to a vector bundle $F(\rho )\to S$ with a flat
covariant derivative. 
  \end{proposition}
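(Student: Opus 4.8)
The plan is to realize $F(\rho)$ as the vector bundle underlying a local system on~$S$: equivalently, to build it from a locally constant cocycle of transition functions, which is exactly the data of a flat vector bundle.

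First I would work one connected component of~$S$ at a time, so that the fibers are all diffeomorphic to a fixed closed $(n-1)$-manifold~$Y$. A smooth fiber bundle with fiber~$Y$ is described by a $\Diff(Y)$-valued cocycle: choose an open cover $\{U_\alpha\}$ of~$S$ together with trivializations $\phi_\alpha\:\rho\inv(U_\alpha)\to U_\alpha\times Y$, so that $\phi_\alpha\circ\phi_\beta\inv$ is given over $U_\alpha\cap U_\beta$ by a smooth map $g_{\alpha\beta}\:U_\alpha\cap U_\beta\to\Diff(Y)$ with $g_{\alpha\beta}g_{\beta\gamma}=g_{\alpha\gamma}$. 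Over~$U_\alpha$ set $F(\rho)|_{U_\alpha}=U_\alpha\times F(Y)$ and glue over $U_\alpha\cap U_\beta$ by the automorphisms $s\mapsto F\bigl(g_{\alpha\beta}(s)\bigr)$ of~$F(Y)$; these assemble into a $\GL\bigl(F(Y)\bigr)$-valued cocycle because $F$ functorially assigns an automorphism of~$F(Y)$ to each diffeomorphism of~$Y$, and $F(\rho)\to S$ is the resulting vector bundle.

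The crux is that this cocycle is \emph{locally constant}. By hypothesis the $\Diff(Y)$-action on~$F(Y)$ factors through $\pi_0\Diff(Y)$; since the path components of $\Diff(Y)$ in its $C^\infty$ topology are open, the quotient map $\Diff(Y)\to\pi_0\Diff(Y)$ is continuous with discrete target, so the composite $s\mapsto F\bigl(g_{\alpha\beta}(s)\bigr)$ factors through the discrete set $\pi_0\Diff(Y)$ and is therefore locally constant on $U_\alpha\cap U_\beta$. A vector bundle built from locally constant transition functions carries a canonical flat covariant derivative~$\nabla$---the one declaring the local sections pulled back from~$F(Y)$ to be parallel---which proves the proposition. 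Equivalently, $F(\rho)$ is the bundle associated to the monodromy representation of $\pi_1(S,s)$ on the stalk $F(\rho)_s=F(\sY_s)$, and smoothness and flatness are then manifest.

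An essentially equivalent but more invariant packaging of the flat structure is to construct parallel transport directly: pulling~$\rho$ back along a path~$\gamma$ in~$S$ gives a fiber bundle over the contractible base~$[0,1]$, hence a trivial one, and any trivialization yields an isomorphism $F(\rho)_{\gamma(0)}\cong F(\rho)_{\gamma(1)}$; two such trivializations differ by a loop in $\Diff(Y)$ based at~$\id_Y$, which lies in the identity component and therefore acts trivially on~$F(Y)$, so the isomorphism is independent of the choices, and by the same argument it depends only on the homotopy class of~$\gamma$ rel endpoints. The only genuinely delicate point---the step I would be most careful about---is foundational: fixing the topology on $\Diff(Y)$ (equivalently, the precise sense in which a smooth fiber bundle of closed manifolds has structure ``group'' $\Diff(Y)$) so that the cocycles $g_{\alpha\beta}$ are bona fide smooth maps into it and so that ``the action factors through~$\pi_0$'' truly forces the $\GL\bigl(F(Y)\bigr)$-valued cocycle to be locally constant. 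Once that bookkeeping is in place the argument is formal: it amounts to the statement that a topological field theory sees a diffeomorphism only through its mapping class.
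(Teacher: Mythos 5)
Your proposal is correct and follows essentially the same route as the paper, which deduces flatness precisely from the fact that a topological theory lets $\Diff(Y)$ act on $F(Y)$ through $\pi_0\Diff(Y)$, so the induced transition data are locally constant and yield a flat covariant derivative. Your write-up simply fills in the cocycle/monodromy bookkeeping that the paper leaves implicit.
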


  \begin{remark}[]\label{thm:8}
 If $\sX\to S$ is a fiber bundle of bordisms from incoming boundaries
$\rho _0\:\sY_0\to S$ to outgoing boundaries $\rho _1\:\sY_1\to S$, then 
  \begin{equation}\label{eq:7}
     F(\sX/S)\in H^0\bigl(S;\Hom\bigl(F(\rho_0),F(\rho_1) \bigr) \bigr) 
  \end{equation}
is a \emph{flat} section of $\Hom\bigl(F(\rho _0),F(\rho _1) \bigr)\to S$.
In \emph{cohomological field theory} \eqref{eq:7}~is promoted to a cohomology
class of arbitrary (mixed) degree; see~\cite{EY,DVV,LZ,G,KM}, for example.
There is a cochain refinement due to Segal~\cite{S}.  
  \end{remark}

In the general (nontopological) case, a family of quantum mechanical systems
parametrized by a smooth manifold~$S$ is formulated on a continuous fiber
bundle $\sH\to S$ of Hilbert spaces; see \cite[Appendix~D]{FM} for a careful
treatment.  One should postulate a dense subbundle $\sH_\infty \to S$ that is
smooth and carries a covariant derivative that need not be flat.  In
\autoref{thm:14} we encounter a family of \emph{topological} quantum mechanical
systems inside a single topological field theory, and $\sH\to S$ has finite
rank and a flat structure.

The pure states in a quantum mechanical system form a \emph{projective} space,
not a \emph{linear} space.  The projectivity can be considered to be anomaly of
the theory.  The anomaly of an $n$-dimensional quantum field theory includes
the projectivity of the state space of a closed $(n-1)$-manifold.  We model the
projectivity in a family of quantum mechanical systems parametrized by a smooth
manifold~$S$ as a gerbe with connection over~$S$.  Its isomorphism class lies
in the differential cohomology group~$\widecheck{H}^3(S)$, and there are
several models possible for a cocycle or geometric representative.  (Any
geometric representative of a degree~3 differential cohomology class may be
considered to be a `gerbe with connection', though some authors may prefer a
more restricted usage of the term.)  We turn now to a \v{C}ech model, both for
the differential cohomology cocycle and for a twisted vector bundle with
covariant derivative.  (See~\cite[\S2]{F2} for an alternative model.)

  \subsection*{Twisting by a gerbe}

Let $S$~be a smooth manifold, and suppose $\{U_i \}_{i \in \rA}$ is
an open cover.  The \v{C}ech data for a gerbe with connection $\sG\to S$,
implicitly assumed \emph{topologically} trivialized on each~$U_i $, is: 
  \begin{equation}\label{eq:8}
     \begin{aligned} &B_i \in \Omega ^2(U_i ;\iR), \\ &L_{i
      j }\longrightarrow U_i \cap U_j , \\ &\varphi \mstrut
      _{i j k }\:\underline{\CC}\xrightarrow{\;\;\cong
      \;\;}L\mstrut _{j k }\otimes L\inv _{i k }\otimes
      L\mstrut _{i j }\qquad \textnormal{on $U_i \cap U_j \cap
      U_k $},\end{aligned} 
  \end{equation}
where $L_{i j }\to U_i \cap U_j $ is a hermitian line bundle with unitary
covariant derivative, $\underline{\CC}\to U_i\cap U_j\cap U_k$ is the product
line bundle with constant fiber~$\CC$, and $\varphi \mstrut _{ijk}$~is a
\emph{flat} isomorphism of line bundles.  Let $\omega _{i j }\in \Omega
^2(U_i \cap U_j;\iR)$ be the curvature of $L_{i j }\to U_i \cap U_j $.  Then
the conditions imposed on the data~\eqref{eq:8} are:
  \begin{equation}\label{eq:9}
     \begin{aligned} B_j -B_i &=\omega _{i j }\qquad
      &&\textnormal{on $U_i \cap U_j $}, \\ \varphi \mstrut _{j
      k \ell }\otimes \varphi \inv _{i k \ell }\otimes
      \varphi \mstrut _{i j \ell }\otimes \varphi \inv _{i
      j k } &= \id\qquad &&\textnormal{on $U_i \cap U_j
      \cap U_k \cap U_\ell $}.\end{aligned} 
  \end{equation}
The differentials~$dB_i $ patch to a global closed 3-form $H\in \Omega
^3(S;\iR)$, the \emph{curvature} of $\sG\to S$.  We refer to~\cite{H} for
more on gerbes with connection, and in particular for the definition
of a trivialization.

  \begin{definition}[]\label{thm:9}
 The gerbe with connection $\sG\to S$ is \emph{flat} if $dB_i =0$ for all $i
\in \rA$.
  \end{definition}

\noindent
 A flat gerbe determines a cohomology class $\mu \in H^2(S;\RZ)$, the de Rham
class of $\sqrt{-1}B/2\pi $; equivalent flat gerbes give the same cohomology
class.  The values of~$\mu $ on cohomology classes in~$H_2(S)$ are the
\emph{monodromies} of the flat gerbe.

The state spaces of an anomalous family of quantum mechanical systems with
anomaly $\sG\to S$ form a \emph{$\sG$-twisted hermitian bundle $\sE\to S$ with
covariant derivative} on~$S$.  We sketch this notion for the finite rank case
of interest here.  The \v{C}ech data for $\sE\to S$ is:
  \begin{equation}\label{eq:10}
     \begin{aligned} &(E_i ,\nabla _i )\longrightarrow U_i \\
      &\psi _{i j }\:L_{i j }\otimes E_j
      \longrightarrow E_i \qquad \textnormal{on $U_i \cap U_j$
      },\end{aligned} 
  \end{equation}
where $E_i \to U_i $ is a (finite rank) hermitian vector bundle with unitary
covariant derivative~$\nabla _i $, and $\psi _{i j }$~is an isomorphism of
vector bundles.  Let the curvature of~$\nabla _i $ be $F_i \in \Omega
^2\bigl(U_i ,\SH(E_i )\bigr)$, a skew-hermitian valued 2-form.  The
conditions imposed on the data~\eqref{eq:10} are:
  \begin{equation}\label{eq:11}
     \begin{aligned} F_i +B_i &= \psi _{i j }(F_j +B_j )\qquad &&\textnormal{on
     $U_i \cap U_j 
      $}, \\ \psi \mstrut
      _{i k } \circ \varphi\inv  _{i j k }\circ \psi \inv _{j k } \circ \psi
       _{i j }\inv  &= \id\mstrut _{E_i }\qquad &&\textnormal{on $U_i \cap U_j \cap
      U_k $}.\end{aligned} 
  \end{equation}
In the first expression, the imaginary 2-forms $B_i ,B_j ,\omega _{i j }$ act
on~$E_i ,E_j $ as scalar skew-hermitian endomorphisms.  The bundles of Lie
algebras $\mathfrak{u}(E_i)\to U_i$ patch to a global bundle of Lie algebras
using~$\psi _{ij}$, since $\mathfrak{u}(E_j)$ and $\mathfrak{u}(L_{ij}\otimes
E_j)$ are canonically isomorphic.  The first equation in~\eqref{eq:11} tells
that $F_i +B_i $ patch to a global 2-form~$\OsE$ with values in this bundle of
Lie algebras.  We call $\OsE$ the \emph{curvature} of the $\sG$-twisted vector
bundle $\sE\to S$.

  \begin{remark}[]\label{thm:10}
 $\sE\to S$ is not a global vector bundle.  Rather, the projectivizations $\PP
E_i \to U_i $ patch to a global projective bundle using~$\psi _{ij}$.  Also,
  \begin{equation}\label{eq:16}
  \begin{aligned}
   \rank\sE&\in  H^0(S; \ZZ) \\ 
   c_1(\sE)&\in H^2\bigl(S;\zmod{(\rank\sE)}\bigr)
  \end{aligned}
  \end{equation}
are well-defined.  For the Chern class, observe that if $E\to S$ is a rank~$r$
vector bundle and $L\to S$ is a line bundle, then $\Det(L\otimes E)\cong
L^{\otimes r}\otimes \Det E$.  Identify $\zmod r\cong \bmu r$ via $k\mapsto
e^{2\pi ik/r}$, where $\bmu r\subset \Cx$ is the abelian group of
$r^{\textnormal{th}}$ roots of unity.  Then the Chern class is a connecting
homomorphism in the long exact sequence of cohomology derived from the central
group extension $\bmu r\to \SU_r\to \PU_r$; it is the obstruction to lifting a
projective bundle to a vector bundle with trivialized determinant line bundle.
  \end{remark} 

  \begin{remark}[]\label{thm:13}
 The unitary structure does not enter our arguments.  We include it since
physical theories are unitary. 
  \end{remark}
 
  \begin{definition}[]\label{thm:11}
 The twisted bundle $\sE\to S$ is \emph{projectively flat} if $\OsE=0$. 
  \end{definition}

\noindent
 Applying the differential~$d$ to the trace of $F_i+B_i$, we learn that if
$\sE\to S$ is projectively flat, then the gerbe $\sG\to S$ is flat: $dB_i=0$.
(Recall that $d(\tr F_i)=0$ by the Bianchi identity.)

If $\sG\to S$ is topologically trivial, then we can and do choose the singleton
open cover $\rA=\{S\}$, in which case the only data~\eqref{eq:8} is the global
2-form $B\in \Omega ^2(S;\iR)$.  Suppose in addition that $\sG\to S$ is flat,
so that $dB=0$.  A $\sG$-twisted hermitian bundle with covariant derivative is
then given by~\eqref{eq:10} as a hermitian vector bundle $E\to S$ with unitary
covariant derivative, and it is projectively flat if its curvature $F\in \Omega
^2\bigl(S;\SH(E) \bigr)$ satisfies
  \begin{equation}\label{eq:12}
     F+B=0. 
  \end{equation}

  \begin{proposition}[]\label{thm:12}
 Let $\sG\to S$ be a flat gerbe with equivalence class $\mu \in H^2(S;\RZ)$.
Let $\sE\to S$ be a projectively flat $\sG$-twisted hermitian bundle of
rank~$r$.  Then
  \begin{equation}\label{eq:13}
     \mu =-\frac{c_1(\sE)}{r} \pmod1.
  \end{equation} 
In particular, $\mu $~lies in the image of~$H^2(S;\QZ)$. 
  \end{proposition}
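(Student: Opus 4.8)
The plan is to extract, from the $\sG$-twisted bundle $\sE\to S$ of rank $r$, an honest line bundle whose Chern class computes both $r\mu$ and $c_1(\sE)$, and then divide by $r$. Concretely, I would first pass to the determinant. Working in the \v{C}ech picture \eqref{eq:10}, the local determinant line bundles $\Det E_i\to U_i$ are glued by the isomorphisms $\psi_{ij}$, but with a twist: using $\Det(L_{ij}\otimes E_j)\cong L_{ij}^{\otimes r}\otimes\Det E_j$ as in \autoref{thm:10}, the gluing data for the $\Det E_i$ becomes an isomorphism $L_{ij}^{\otimes r}\otimes\Det E_j\to\Det E_i$, with cocycle condition twisted by $\varphi_{ijk}^{\otimes r}$. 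In other words, $\Det\sE$ is a $\sG^{\otimes r}$-twisted line bundle. Since a line bundle has rank one, a $\sG^{\otimes r}$-twisted line bundle exists precisely when $\sG^{\otimes r}$ is (topologically) trivializable; and indeed it is, because the data $\{L_{ij}^{\otimes r},\ \Det E_i\}$ with the maps just described furnishes a trivialization of $\sG^{\otimes r}$ as a (topological) gerbe.

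Next I would compute curvatures. The curvature of $\Det E_i$ is $\tr F_i$, and the curvature of $\sG^{\otimes r}$ is $rH$ where $H$ is the curvature of $\sG$; since $\sG$ is flat we have $H=0$, so $\sG^{\otimes r}$ is a flat gerbe. The twisted line bundle $\Det\sE$ then carries a connection with curvature $\tr(F_i)$, and the trivialization of $\sG^{\otimes r}$ it provides has "$B$-field" equal to $\sum$ of the scalar pieces, i.e. the global $2$-form representing $r\mu$ is $\frac{\sqrt{-1}}{2\pi}\cdot r\,B$ (using the singleton-cover reduction is cleanest: after reducing to $\rA=\{S\}$, $\sE$ is an ordinary bundle $E\to S$ with $F+B=0$ by \eqref{eq:12}, so $\tr F=-rB$, and $\Det E$ is an ordinary line bundle with connection of curvature $-rB$). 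Hence the de Rham class of $\frac{\sqrt{-1}}{2\pi}(rB)$ equals $-c_1(\Det E)=-c_1(\sE)$ in $H^2(S;\RR)$, which says exactly that $r\mu\equiv -c_1(\sE)$ when $c_1(\sE)$ is viewed in $H^2(S;\RR)$.

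To upgrade this from an equality in real cohomology to the congruence \eqref{eq:13} in $\RZ$, I would argue integrally: $c_1(\sE)\in H^2(S;\zmod r)$ by \eqref{eq:16}, and lifting it to $H^2(S;\ZZ)$ along the reduction map amounts to choosing a twisted-to-untwisted lift, whose obstruction maps under $\zmod r\cong\bmu r$ to the class of $\sG$ itself in $H^3(S;\ZZ)$ torsion, i.e. to $r\mu\pmod1$ in the relevant sense (this is the connecting-homomorphism description recalled in \autoref{thm:10}). Tracing the exact sequences, the class $\mu\in H^2(S;\RZ)$ is the image of $-c_1(\sE)/r\in H^2(S;\tfrac1r\ZZ/\ZZ)$ under $\tfrac1r\ZZ/\ZZ\hookrightarrow\RZ$, which is the precise content of \eqref{eq:13}; and since $\tfrac1r\ZZ/\ZZ\subset\QZ$, this exhibits $\mu$ in the image of $H^2(S;\QZ)\to H^2(S;\RZ)$.

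The main obstacle I anticipate is the bookkeeping in the general (non-trivialized) \v{C}ech setting: keeping careful track of how $\psi_{ij}$, $\varphi_{ijk}$, and their $r$-th tensor powers assemble $\Det\sE$ into an honest line bundle with a connection, and verifying that the resulting trivialization of $\sG^{\otimes r}$ has the claimed $B$-field, so that the real-cohomology identity and its integral refinement are genuinely the same statement rather than merely compatible mod torsion. Reducing immediately to the singleton cover via \eqref{eq:12} sidesteps most of this for the curvature computation, but the integral/torsion refinement still requires the connecting-homomorphism argument, and getting the signs and the identification $\zmod r\cong\bmu r$ consistent with the orientation of the extension $\bmu r\to\SU_r\to\PU_r$ is where I would be most careful.
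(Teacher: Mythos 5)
Your topologically trivial case (singleton cover, take the trace of \eqref{eq:12}) is exactly the paper's first case, and your determinant observation---that $\{\Det E_i\}$ together with the $\det\psi_{ij}$ trivializes the topological gerbe underlying $\sG^{\otimes r}$---is also the paper's. The gap is in the final ``tracing the exact sequences'' step, which is where the precise congruence \eqref{eq:13} must be produced when $\sG$ is \emph{not} topologically trivial. In that situation there is no global $B$ and no global $\Det E$, so the parenthetical curvature computation cannot be run as written; what the projectively flat $\sG^{\otimes r}$-twisted line bundle $\Det\sE$ actually yields is that the topologically trivial flat gerbe $\sG^{\otimes r}$ has vanishing class, i.e.\ $r\mu=0$ in $H^2(S;\RZ)$. (Your intermediate assertion that ``$r\mu\equiv-c_1(\sE)$ in $H^2(S;\RR)$'' carries no more information than this: $c_1(\sE)$ exists only in $H^2(S;\zmod r)$, and the mod~$1$ reduction of an integral class vanishes in $H^2(S;\RZ)$.) Your remaining input is the connecting-homomorphism fact, via $\bmu r\to\SU_r\to\PU_r$, that $c_1(\sE)$ and $\mu$ have the same image under the Bockstein $\beta$ into $H^3(S;\ZZ)$, i.e.\ $\beta(\mu)=\beta\bigl(-c_1(\sE)/r\bigr)$. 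But the two facts ``$r\mu=0$'' and ``$\beta(\mu)=\beta(-c_1(\sE)/r)$'' determine $\mu$ only up to an element of $\bigl\{x\in H^2(S;\RZ): rx=0,\ \beta(x)=0\bigr\}$, and this ambiguity subgroup is nonzero whenever $H^2(S;\RR)\neq0$: every class $-c_1(\sE)/r+x$ with $x$ an $r$-torsion class in the image of $H^2(S;\RR)\to H^2(S;\RZ)$ satisfies both constraints. So purely topological exact-sequence chasing cannot single out \eqref{eq:13}; after the determinant step you must feed the projective-flatness (curvature) data back in, applied to something untwisted and of finite rank.

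That is how the paper closes the general case: since $\sG^{\otimes r}$ is topologically trivial, the projectively flat $\sG^{\otimes r}$-twisted bundle $\sE^{\otimes r}$, of rank $r^r$, falls under the first case, and the identity $c_1(E^{\otimes r})=r^r c_1(E)$ is used to convert that conclusion into \eqref{eq:13}; your rank-one determinant line, by contrast, has $c_1\in H^2(S;\zmod1)=0$ and so can only ever see $r\mu=0$. Note that your argument does establish the ``In particular'' statement (indeed $r\mu=0$, so $\mu$ lies in the image of $H^2(S;\QZ)$), and your treatment of the topologically trivial case---which is all that is used in the proofs of the theorems, where the gerbes that arise are topologically trivial---is complete; what is missing is the exact formula \eqref{eq:13} for topologically nontrivial $\sG$.
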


  \begin{proof}
 Suppose first that $\sG\to S$ is topologically trivial, and represent $\sE\to
S$ as a global hermitian vector bundle $E\to S$, as above.  Then
multiply~\eqref{eq:12} by~$\sqrt{-1}/2\pi $ and take the trace to
deduce~\eqref{eq:13}.  In the general case, the existence of a finite rank
$\sG$-twisted bundle $\sE\to S$ implies that the cohomology class $\mu \in
H^2(S;\RZ)$ of the flat gerbe $\sG\to S$ is torsion.  To see this, apply the
determinant to~\eqref{eq:10} to conclude that $\{\Det E_i\}$ trivializes a
power of the gerbe, represented by~$\{L_{ij}^{\otimes r}\}$.  Then
$\sE^{\otimes r}\to S$ is $\sG^{\otimes r}$-twisted and the previous argument
applies.  To execute it we compute $c_1(E^{\otimes r})=r^rc_1(E)$.
  \end{proof}

  \subsection*{Proofs of the theorems}

We first explain the hypotheses of Theorem~\ref{thm:1} more precisely.  Whereas
$\ath$~is a \emph{topological} field theory when restricted to single manifolds
and bordisms, it is \emph{not} topological as a theory on smooth families of
manifolds and bordisms.  However, it is \emph{flat} (as long as $\theta $~is
fixed).  For example, a family of 0-dimensional data parametrized by a smooth
manifold~$S$ is: a finite cover $p\:\tS\to S$, a relative orientation on~$p$,
and a principal $\U(1)$-bundle $P\to\tS$ \emph{with connection}~$\Theta $.
Then $\ath(\tS/S,\Theta )\to S$ is a flat gerbe over~$S$.  A model is as
follows: use a multiplicative pushforward to construct a principal
$\U(1)$-bundle $p_*P\to S$ with connection~$p_*\Theta $, and set $B=-\theta\,
\Omega (p_*\Theta )=-\theta \,p_*\Omega (\Theta )\in \Omega ^2(S;\iR)$.  This
is \v{C}ech data~\eqref{eq:8} for a topologically trivial flat gerbe
$\ath(\tS/S,\Theta )\to S$.  Note that it depends on the connection~$\Theta $
through its curvature~$\Omega (\Theta )$, not just on the principal
$\U(1)$-bundle $P\to\tS$.

  \begin{proof}[Proof of Theorem~\ref{thm:1}]
 Let $F$~be a \emph{projectively topological} boundary theory of~$\ath$, i.e.,
an $\ath$-twisted topological quantum mechanics.  As with~$\ath$, the smooth
families on which we evaluate~$F$ carry differential data, as does the output
of~$F$.  For example, $F(\tS/S,\Theta )\to S$ is an $\ath(\tS/S,\Theta
)$-twisted hermitian bundle with a \emph{projectively flat} connection.  (This
is a relative version of \autoref{thm:14}.)  Specialize to the family of
positively oriented points $S^2\xrightarrow{\;\id\;}S^2$, orient~$S^2$, and let
$P\to S^2$ be a principal $\U(1)$-bundle of degree~1 with connection~$\Theta $.
The class of the flat gerbe $\ath(S^2/S^2,\Theta )\in H^2(S^2;\RZ)\cong \RZ$
is~$-\theta $.  Proposition~\ref{thm:12} implies that $\theta \in \QZ$.
  \end{proof}

  \begin{proof}[Proof of Theorem~\ref{thm:3}]
 Here we only\footnote{as opposed to a fully extended $(0,1,2,3,4)$-theory on
bordisms with arbitrary corners.} assume that $\bth$~is a $(2,3,4)$-theory and
the {projectively topological} boundary theory~$F$ is a $(2,3)$-theory.  Let
$p\:X\to S$ be an oriented fiber bundle over an oriented surface~$S$ with $\dim
X=4$ and $p_1\bigl(T(X/S) \bigr)[X]\neq 0$; see~\cite{A2} for an existence
proof.  As for~$\ath$, the fiber bundle~$p$ carries differential data, here a
Riemannian structure\footnote{The symbol~`$g^{X/S}$' denotes an inner product
on the relative tangent bundle $T(X/S)\to X$; we also need a horizontal
distribution on $p\:X\to S$ (that is not explicitly notated here).}~$g^{X/S}$.
Then there is a Chern-Weil 4-form~$p_1(g)$ constructed from the curvature
of~$g^{X/S}$.  The value~$\bth(X/S)$ is the topologically trivial flat gerbe
over~$S$ whose global closed 2-form is $-\theta \int_{X/S}p_1(g)$.  Proceed as
in Theorem~\ref{thm:1}, now with the additional input that the de Rham
cohomology class $p_*\bigl[p_1(X/S) \bigr]$ of $\int_{X/S}p_1(g)$ is nonzero.
  \end{proof}

  \begin{proof}[Proof of Rationality Criterion]
 Observe that the functionals~\eqref{eq:15} extend to~$\RZ$-coefficients, and
that the characteristic class $\kappa \in H^{n+1}(\BSO\times \BG;\RZ)$ lies in
the image of $H^{n+1}(\BSO\times \BG;\QZ)$ iff $\fpjPj(\kappa )\in \QZ$ for
all~$j\in J$.  This is the case if a projectively topological boundary
theory~$F$ exists: evaluate~$F$ on each~$(p_j,P_j)$ and apply the previous
arguments.
  \end{proof}

 \bigskip\bigskip
\newcommand{\etalchar}[1]{$^{#1}$}
\providecommand{\bysame}{\leavevmode\hbox to3em{\hrulefill}\thinspace}
\providecommand{\MR}{\relax\ifhmode\unskip\space\fi MR }
\providecommand{\MRhref}[2]{%
  \href{http://www.ams.org/mathscinet-getitem?mr=#1}{#2}
}
\providecommand{\href}[2]{#2}

  \end{document}